\documentclass[letterpaper,11pt]{article}
\usepackage{geometry}
\geometry{margin=1in}

\usepackage{amsmath,color}
\usepackage{commath}
\usepackage{amsthm}
\usepackage{amssymb}
\usepackage{setspace,mathrsfs}
\usepackage{algorithm,natbib}
\usepackage{algorithmic}
\usepackage{url,amsmath,amsfonts,amssymb}
\usepackage{caption,subcaption,algorithm,graphicx}
\usepackage{listings,enumerate,color,relsize,multirow,rotating}

%\setbeamertemplate{theorem}[ams style]
%\setbeamertemplate{theorems}[numbered]

%\doublespacing

\def\bzero{\mathbf{0}}

\newtheorem{prop}{Proposition}

\newtheorem{remark}{Remark}

%Declaration Section

%\newtheorem{Proof}{Proof}

%Command Section
%\errorcontextlines=0
%\numberwithin{equation}{equation}

\newcommand{\trans}{^{\intercal}}

\def\bgamma{\boldsymbol{\gamma}}

\def\Isc{\mathcal{I}}

\def\X{\boldsymbol{X}}
\def\btheta{\boldsymbol{\theta}}
\def\x{\boldsymbol{x}}
\def\balpha{\boldsymbol{\alpha}}

\def\subpnorm{_{{\rm P},2}}
\def\bvarphi{\boldsymbol{\varphi}}
\def\bpsi{\boldsymbol{\psi}}
\def\bmu{\boldsymbol{\mu}}
\def\supmfk{^{[\text{-}k]}}
\def\supmfkj{^{[\text{-}k,\text{-}j]}}

\def\submfkj{_{\text{-}k,\text{-}j}}
\def\bC{\boldsymbol{C}}

\definecolor{darkred}{RGB}{150,50,50}
\definecolor{brown}{RGB}{250,100,100}
\definecolor{green}{RGB}{000,150,100}
\definecolor{purple}{RGB}{250,000,180}

%Special for Tests
%\newcounter{Problem}
%\setcounter{Problem}{1}
%\newenvironment{Problem}
%{\medskip \noindent{\sc Problem \theProblem.}\addtocounter{Problem}{1}}
%{\smallskip}

\begin{document}

\title{A Note on Debiased/Double Machine Learning \\ Logistic Partially Linear Model}
\date{}
%\author{}
\author{Molei Liu$^1$}

\footnotetext[1]{Department of Biostatistics, Harvard Chan School of Public Health.}

\maketitle

\begin{abstract}
\noindent %Due to the ``unextractablility" of the logit link function, 
It is of particular interests in many application fields to draw doubly robust inference of a logistic partially linear model with the predictor specified as combination of a targeted low dimensional linear parametric function and a nuisance nonparametric function. In recent, \cite{tan2019doubly} proposed a simple and flexible doubly robust estimator for this purpose. %utilizing the special form of its nuisance tangent space. 
They introduced the two nuisance models, i.e. nonparametric component in the logistic model and conditional mean of the exposure covariates given the other covariates and fixed response, and specified them as fixed dimensional parametric models. Their framework could be potentially extended to machine learning or high dimensional nuisance modelling exploited recently, e.g. in \cite{chernozhukov2016double,chernozhukov2018double} and \cite{smucler2019unifying,tan2020model}. 

Motivated by this, we derive the debiased/double machine learning logistic partially linear model in this note. For construction of the nuisance models, we separately consider the use of high dimensional sparse parametric models and general machine learning methods. By deriving certain moment equations to calibrate the first order bias of the nuisance models, we preserve a model double robustness property on high dimensional ultra-sparse nuisance models. We also discuss and compare the underlying assumption of our method with debiased LASSO \citep{van2014asymptotically}. To implement the machine learning proposal, we design a ``full model refitting" procedure that allows the use of any blackbox conditional mean estimation method in our framework. Under the machine learning setting, our method is rate doubly robust in a similar sense as \cite{chernozhukov2016double}.

\end{abstract}

\noindent{\bf Keywords}: Logistic partially linear model, Double machine learning, Debiased inference.

\section{Introduction}

Consider a logistic partially linear model. Let $\{(Y_i,A_i,\X_i):i=1,2,\ldots,n\}$ be independent and identically distributed samples of $Y\in\{0,1\}$, $A\in\mathbb{R}$ and $\X\in\mathbb{R}^p$. Assume that
\begin{equation}
{\rm P}(Y=1\mid A,\X)={\rm expit}\{\beta_0 A+r_0(\X)\},
\label{model}
\end{equation}
where ${\rm expit}(\cdot)={\rm logit}^{-1}(\cdot)$, ${\rm logit}(a)=\log\{a/(1-a)\}$ and $r_0(\cdot)$ is an unknown nuisance function of $\X$. In a casual scenario with $A$ taken as the exposure (treatment) of interests, $Y$ being the observed outcome and $\X$ representing all the confounding variables, the parameter $\beta_0$ is of particular interests in that it measures the casual effect of $A$ on the potential outcome, in a scale of logarithmic odds ratio. And as the most common and natural way to characterize the casual model for a binary outcome, model (\ref{model}) is considered in extensive application fields like medical science, economy and political science. While for observational studies with unobserved confounding variables such as electronic health record (EHR) studies, model (\ref{model}) also plays an importatn role in studying the association between a phenotype (disease outcome) $Y$ and a key feature $A$, e.g. the diagnosis code for $Y$, conditional on patient profiles $\X$.

Our goal is to estimate and infer $\beta_0$ asymptotic normally at the rate $n^{-1/2}$. It has beend shown that if $\X$ is a scalar and $r_0(\cdot)$ is smooth, the semiparametric kernel or sieve regression \citep{severini1994quasi,lin2006semiparametric} works well for this purpose. However, when $\X$ is of relatively high dimensionality, these classic approaches have poor performance due to curse of dimensionality and one would specify $r_0(\x)$ in a parametric form: $r(\x)=\x\trans\bgamma$. To enhance the robustness to the potential misspecification of $r(\x)$, \cite{tan2019doubly} proposed a doubly robust estimating equation for $\beta$ including a parametric model $m(\x)=g(\x\trans\balpha)$ with a known link function $g(\cdot)$ for the conditional mean $m_0(\x)={\rm E}(A\mid Y=0,\X=\x)$:
\begin{equation}
\frac{1}{n}\sum_{i=1}^n\widehat\phi(\X_i)\left\{Y_ie^{-\beta A_i-\X_i\trans\widehat\bgamma}-(1-Y_i)\right\}\left\{A_i-g(\X_i\trans\widehat\balpha)\right\}=0,
\label{equ:dr:par}
\end{equation}
where $\widehat\phi(\x)$ is an estimation of some scalar nuisance function $\phi(\x)$ affecting the asymptotic efficiency of the estimator, and $\widehat\balpha$ and $\widehat\bgamma$ are two fixed dimensional nuisance model estimators. As demonstrated in \cite{tan2019doubly}, $\widehat\beta$ solved from (\ref{equ:dr:par}) is doubly robust in the sense that it is valid when either $r(\x)=\x\trans\bgamma$ is correctly specified for the nonparametric component in the logistic partial model, or $m(\x)=g(\x\trans\balpha)$ is correct for the conditional mean model $m_0(\x)={\rm E}(A\mid Y=0,\X=\x)$. It shows a novel doubly robustness property since prior to this, the doubly robust semiparametric estimation of odds ratio was built upon $p(A\mid \X,Y=0)$, the conditional density of $A$ given $\X$ and $Y=0$ \citep[e.g.]{yun2007semiparametric,tchetgen2010doubly}, requiring a stronger model assumption than (\ref{equ:dr:par}) for continuous $A$.

Nevertheless, \cite{tan2019doubly} focuses on fixed dimensional parametric nuisance models that are still prone to misspecification in practice. And their proposed framework is not readily applicable to the high dimensional \citep{athey2016approximate,chernozhukov2018double,smucler2019unifying,tan2020model} or general machine learning \citep{chernozhukov2016double} nuisance models frequently exploited in recent years. This is because for such nuisance models with higher complexity, simply using them to replace the fixed dimensional parametric models in (\ref{equ:dr:par}) incurs excessive fitting bias and does not guarantee the desirable property of $\widehat\beta$. In addition, estimating $r_0(\x)$ with arbitrary machine learning algorithms (of conditional mean) is not flexible because it is linked with the response through a nonlinear logit function. In this note, we handle these challenges and fill the gap by deriving the extensions of (\ref{equ:dr:par}) to accommodate high dimensional sparse nuisance models or general machine learning nuisance models separately.

For the high dimensional sparse model setting, i.e. $p\gg n$ and the two nuisance components are specified as parametric models with sparse coefficients, we realize bias reduction with respect to regularization errors of the nuisance estimators through certain dantzig moment equations of $\X$. Under the ultra-sparsity assumption of the nuisance models, our estimator preserves the same model double robustness property as the fixed ($p\ll n$) dimensional nuisance models. Compared with the debiased (desparsified) LASSO estimator for logistic model \citep{van2014asymptotically,jankova2016confidence}, we find our model sparsity assumption is more reasonable and explainable while debiased LASSO is being criticized on requiring the inverse information matrix to be sparse, a generally unverifiable technical condition \citep{xia2020revisit}.

Under the general machine learning framework, 
%adopt cross-fitting and construct the nuisance models according to the moment conditions derived for bias reduction. 
our approach allows for the use of any blackbox learning algorithm for condition mean estimation as in \cite{chernozhukov2016double}. Unlike the partially linear model considered in their paper, this generality is not readily achievable on logistic model due to its non-linear link function. We propose a easy-to-implement ``full model refitting" procedure to handle this problem and make implementation of learning algorithms flexible in our framework. Similar to \cite{chernozhukov2016double}, we discuss the rate double robustness property of the proposed estimator assuming that the machine learning estimation of the two nuisance models approaches the true models at certain geometric rates.

\section{Some preliminary derivation}\label{sec:moti}

Before introducing the specific methods in Section \ref{sec:method}, we first present a (simplified) generalization of the doubly robust estimating equation (\ref{equ:dr:par}) and derive its first and second order error decomposition, which plays a central role in motivating and guiding our method construction and theoretical analysis. Suppose the nuisance models $r_0(\x)$ and $m_0(\x)$ are estimated by $\widehat r(\x)$ and $\widehat m(\x)$ that approach some limiting functions $\bar r(\x)$ and $\bar m(\x)$. Motivated by (\ref{equ:dr:par}), we consider
\begin{equation}
\frac{1}{n}\sum_{i=1}^n\left\{Y_ie^{-\beta A_i}-(1-Y_i)e^{\widehat r(\X_i)}\right\}\left\{A_i-\widehat m(\X_i)\right\}=0,
\label{equ:dr:general}    
\end{equation}
and denote its solution as $\widehat\beta$. Compared with (\ref{equ:dr:par}), we omit here a multiplicative factor $e^{-\widehat r(\X_i)}\widehat\phi(\X_i)$ that will only affect asymptotic variance of the estimator, to simplify the formation so that the attention would not be distracted from our main idea. And we shall comment on the incorporation of this nuisance function with our framework in Section \ref{sec:eff}.

Concerning the error depending on $\widehat r(\cdot)$ and $\widehat m(\cdot)$, we decompose equation (\ref{equ:dr:general}) as follows:
\begin{equation}
\begin{split}
&\frac{1}{n}\sum_{i=1}^n\left\{Y_ie^{-\beta A_i}-(1-Y_i)e^{\widehat r(\X_i)}\right\}\left\{A_i-\widehat m(\X_i)\right\}\\
=&\frac{1}{n}\sum_{i=1}^nh(Y_i,A_i,\X_i;\bar r(\cdot),\bar m(\cdot))-\frac{1}{n}\sum_{i=1}^n\left\{Y_ie^{-\beta A_i}-(1-Y_i)e^{\bar r(\X_i)}\right\}\left\{\widehat m(\X_i)-\bar m(\X_i)\right\}\\
&-\frac{1}{n}\sum_{i=1}^n(1-Y_i)e^{\bar r(\X_i)}\left\{\widehat r(\X_i)-\bar r(\X_i)\right\}\left\{A_i-\bar m(\X_i)\right\}\\
&+O_p\left(\|\widehat r(\X)-\bar r(\X)\|_{{\rm P},2}^2+\|\widehat m(\X)-\bar m(\X)\|_{{\rm P},2}^2\right)+o_p(1/\sqrt{n}),
\end{split}  
\label{equ:expand}
\end{equation}
where we denote by $h(Y_i,A_i,\X_i;\bar r(\cdot),\bar m(\cdot))=\{Y_ie^{-\beta A_i}-(1-Y_i)e^{\bar r(\X_i)}\}\{A_i-\bar m(\X_i)\}$, define $\|f(\X)\|\subpnorm={\rm E}f^2(\X)$, and extract the second order terms (and beyond) as $\|\widehat r(\X)-\bar r(\X)\|_{{\rm P},2}^2+\|\widehat m(\X)-\bar m(\X)\|_{{\rm P},2}^2$ under certain mild regularity conditions.
%\footnote{See \cite{chernozhukov2016double} for a rigorous and detailed theoretical framework}. 
When at least one nuisance model is correctly specified, i.e. $\bar r(\cdot)=r_0(\cdot)$ or $\bar m(\cdot)=m_0(\cdot)$ holds, we have 
\[
{\rm E}(1-Y)\{e^{\bar r(\X)}-e^{r_0(\X)}\}\{A-\bar m(\X)\}={\rm E}\left[\{e^{\bar r(\X)}-e^{r_0(\X)}\}\{A-\bar m(\X)\}\Big| Y=0,\X\right]=0,
\]
leading to ${\rm E}h(Y,A,\X;\bar r(\cdot),\bar m(\cdot))={\rm E}h(Y,A,\X;r_0(\cdot),\bar m(\cdot))$ and $\beta_0$ solves ${\rm E}h(Y,A,\X;\bar r(\cdot),\bar m(\cdot))=0$. Similar to various existing work like \cite{chernozhukov2016double,chernozhukov2016locally,chernozhukov2018double} and \cite{tan2020model}, the root mean squared errors (rMSEs) of high dimensional parametric and machine learning methods, $\|\widehat r(\X)-\bar r(\X)\|_{{\rm P},2}$ and $\|\widehat m(\X)-\bar m(\X)\|_{{\rm P},2}$, are assumed to be $o_p(n^{-1/4})$ and consequently their impact is negligible asymptotically. Thus, it remains to remove the first order bias terms:
\begin{equation}
\begin{split}
&\Delta_m=\frac{1}{n}\sum_{i=1}^n\left\{Y_ie^{-\beta A_i}-(1-Y_i)e^{\bar r(\X_i)}\right\}\left\{\widehat m(\X_i)-\bar m(\X_i)\right\};\\
&\Delta_r=\frac{1}{n}\sum_{i=1}^n(1-Y_i)e^{\bar r(\X_i)}\left\{\widehat r(\X_i)-\bar r(\X_i)\right\}\left\{A_i-\bar m(\X_i)\right\},
\end{split}   
\label{equ:bias}
\end{equation}
In the low dimensional parametric case, these first order terms do not impact the asymptotic normality of $n^{1/2}(\widehat\beta-\beta_0)$ as the nuisance estimators themselves are asymptotic normal at rate $n^{-1/2}$. While for high dimensional and machine learning nuisance models, removal of them is not trivial due to the excessive fitting error of the nuisance models. And the non-negligible bias incurred by this is known as over-fitting (or first order) bias \citep{chernozhukov2016double}. In Section \ref{sec:method}, we shall derive the constructing procedure for complex nuisance models to remove $\Delta_m$ and $\Delta_r$ properly.

%Machine learning: achieve certain geometric rate: ${\rm E}\{\widehat r(\X)-\bar r(\X)\}^2=O(a_n)$ and  ${\rm E}\{\widehat r(\X)-\bar r(\X)\}^2=O(b_n)$. 

\section{Method}\label{sec:method}

\subsection{High dimensional sparse modeling}\label{sec:method:hd}

Consider the setting with $p\gg n$, $r(\x)=\x\trans\bgamma$ and $m(\x)=g(\x\trans\balpha)$ where $g(\cdot)$ is a monotone link function with derivative $g'(\cdot)$. We derive the constructing procedure for high dimensional sparse nuisance models preserving a similar (model) doubly robustness property as \cite{tan2019doubly}. 

First, we obtain $\widetilde\bgamma$ as some initial estimators for $\bgamma$. Estimating procedure for $\widetilde\bgamma$ is quite flexible as it only needs to satisfy that $\widetilde\bgamma$ converges to some sparse limiting parameter $\bgamma^*$ equaling to the true model parameter $\bgamma_0$ when the nuisance model $r(\x)=\x\trans\bgamma$ is correct. Motivated by Section \ref{sec:moti}, we propose to obtain $\widehat\balpha$ by solving the dantzig moment equation:
\begin{equation}
{\rm min}_{\balpha\in\mathbb{R}^p}\|\balpha\|_1\quad{\rm s.t}\quad\left\|n^{-1}\sum_{i=1}^n(1-Y_i)e^{\widetilde\bgamma\trans\X_i}\left\{A_i-g(\X_i\trans\balpha)\right\}\X_i\right\|_{\infty}\leq\lambda_{\alpha},
\label{equ:dant:m}    
\end{equation}
where $\lambda_{\alpha}$ is a tuning parameter controlling the regularization bias. Then we solve the nuisance $\bgamma$ and the target parameter $\beta$ jointly from:
\begin{equation}
\begin{split}
{\rm min}_{\beta\in\mathbb{R},\bgamma\in\mathbb{R}^p}\|\bgamma\|_1\quad{\rm s.t}\quad\left\|n^{-1}\sum_{i=1}^n\left\{Y_ie^{-\beta A_i}-(1-Y_i)e^{\X_i\trans\bgamma}\right\}g'(\X_i\trans\widehat\balpha)\X_i\right\|_{\infty}&\leq\lambda_{\gamma};\\
n^{-1}\sum_{i=1}^n\left\{Y_ie^{-\beta A_i}-(1-Y_i)e^{\X_i\trans\bgamma}\right\}\left\{A_i-g(\X_i\trans\widehat\balpha)\right\}&=0,
\end{split}
\label{equ:dant:r}    
\end{equation}
Denote the solution of (\ref{equ:dant:r}) as $\widehat\beta$ and $\widehat\bgamma$. We demonstrate as follows that $\widehat\beta$ converges to $\beta_0$ at the parametric rate when at least one nuisance model is correct and both of them are ultra-sparse.

Similar to \cite{tan2020model}, the maximum-norm constraints (also known as Karus--Kuhn--Tucker condition) in (\ref{equ:dant:m}) and (\ref{equ:dant:r}) impose certain moment conditions to the nuisance parameters under potential model misspecification. We shall outline how this assists calibrating the first order bias terms in (\ref{equ:bias}). For simplification, we neglect some technical assumptions and analytical details that could be found in existing literature of high dimensional estimation and semiparametric inference\footnote{See \cite{candes2007dantzig}, \cite{bickel2009simultaneous}, \cite{buhlmann2011statistics} and \cite{negahban2012unified} for general theory of high dimensional regularized estimation. And see \cite{bradic2019sparsity}, \cite{smucler2019unifying} and \cite{tan2020model} for the theoretical framework of analyzing doubly robust estimator of the average treatment effect with high dimensional sparse nuisance models.}. Let $\bar\balpha$ and $\{\bar\bgamma,\bar\beta\}$ represent the limiting values of the solutions to (\ref{equ:dant:m}) and (\ref{equ:dant:r}) respectively, and $s$ be the maximum sparsity level of $\bar\balpha$, $\bar\bgamma$ and $\bgamma^*$. Following literature in high dimension statistics \citep{candes2007dantzig,bickel2009simultaneous,buhlmann2011statistics}, we assume that $\X_i$ is subgaussian with $O(1)$ scale. %and the expected Hessian matrices of the estimators have their eigenvalues bounded and stay away from $0$. 
Then $\lambda_{\alpha}$ and $\lambda_{\gamma}$ are picked at the rate $\lambda=(\log p/n)^{1/2}$ and consequently, one could follow the analysis procedure in literatures like \cite{candes2007dantzig,bickel2009simultaneous,negahban2012unified} to derive that
\begin{equation}
\begin{split}
&\xi_1=\|\widetilde\bgamma-\bgamma^*\|_1+|\widehat\beta-\bar\beta|+\|\widehat\bgamma-\bar\bgamma\|_1+\|\widehat\balpha-\bar\balpha\|_1=O_p(s\lambda);\\
&\xi_2=\|\X\trans(\widetilde\bgamma-\bgamma^*)\|_{{\rm P},2}^2+\|A(\widehat\beta-\bar\beta)\|\subpnorm^2+\|\X\trans(\widehat\bgamma-\bar\bgamma)\|_{{\rm P},2}^2+\|\X\trans(\widehat\balpha-\bar\balpha)\|_{{\rm P},2}^2=O_p(s\lambda^2).
\end{split}    
\end{equation}

\begin{remark}
Note that (\ref{equ:dant:m}) involves the initial estimator $\widetilde\bgamma$ and (\ref{equ:dant:r}) involves the estimator $\widehat\balpha$ obtained beforehand, which requires some additional effort on removing their fitting errors in analyzing $\widehat\bgamma$ and $\widehat\balpha$, compared to the standard analysis procedures of dantzig selector. One could see \cite{bradic2019sparsity,smucler2019unifying,tan2020model} for a similar issue and to find relevant technical details being used for this purpose.
\end{remark}

Now consider the case when at least one nuisance model is correctly specified. Define that
\begin{align*}
\bvarphi(Y_i,A_i,\X_i;\beta,\bgamma,\balpha)=&\{Y_ie^{-\beta A_i}-(1-Y_i)e^{\X_i\trans\bgamma}\}g'(\X_i\trans\balpha)\X_i;\\
\bpsi(Y_i,A_i,\X_i;\bgamma,\balpha)=&(1-Y_i)e^{\bgamma\trans\X_i}\{A_i-g(\X_i\trans\balpha)\}\X_i.
\end{align*}
When $r(\x)$ is correctly specified, i.e. $r_0(\x)=\x\trans\bgamma_0$ for some $\bgamma_0$, we have $\bgamma^*=\bar\bgamma=\bgamma_0$. So it is satisfied that ${\rm E}\bvarphi(Y,A,\X;\bar\beta,\bar\bgamma,\bar\balpha)=\bzero$ by the correctness of $r(\x)$ and ${\rm E}\bpsi(Y,A,\X;\bar\bgamma,\bar\balpha)={\rm E}\bpsi(Y,A,\X;\bgamma^*,\bar\balpha)=\bzero$ by the moment condition in (\ref{equ:dant:m}). When $m(\x)$ is correct: $m_0(\x)=g(\x\trans\balpha_0)$ and $\bar\balpha=\balpha_0$, we have ${\rm E}\bpsi(Y,A,\X;\bar\bgamma,\bar\balpha)=\bzero$ and corresponding to the constraint in the first row of (\ref{equ:dant:r}), it holds that ${\rm E}\bvarphi(Y,A,\X;\bar\beta,\bar\bgamma,\bar\balpha)=\bzero$. In addition, $\bar\beta=\beta_0$ in both cases, according to the moment equation in the second row of (\ref{equ:dant:r}) and the discussion in Section \ref{sec:moti}. Combining these with the subgaussianity of the covariates, the two bias term in (\ref{equ:bias}) can be controlled through
\begin{equation*}
\begin{split}
\Delta_m\leq& \left\|n^{-1}\sum_{i=1}^n\bvarphi(Y_i,A_i,\X_i;\bar\beta,\bar\bgamma,\bar\balpha)\right\|_{\infty}\|\widehat\balpha-\bar\balpha\|_1+O_p(\xi_2)+o_p(1/\sqrt{n})\\
=&O_p\{(\log p/n)^{1/2}\}O_p(s\lambda)+O_p(s\lambda^2)+o_p(1/\sqrt{n})=O_p(s\log p/n)+o_p(1/\sqrt{n});\\
\Delta_r\leq& \left\|n^{-1}\sum_{i=1}^n\bpsi(Y_i,A_i,\X_i;\bgamma^*,\bar\balpha)\right\|_{\infty}\|\widehat\bgamma-\bar\bgamma\|_1+O_p(\xi_2)+o_p(1/\sqrt{n})\\
=&O_p(s\log p/n)+o_p(1/\sqrt{n}),
\end{split}   
\end{equation*}
where the second order error terms can be extracted again into $O_p(\xi_2)+o_p(1/\sqrt{n})$. Thus, when $s=o(n^{1/2}/\log p)$, $\Delta_m$ and $\Delta_r$ are below the parametric rate and consequently, equation (\ref{equ:expand}) is asymptotically equivalent with $n^{-1}\sum_{i=1}^nh(Y_i,A_i,\X_i;\bar r(\cdot),\bar m(\cdot))=0$ and $n^{1/2}(\widehat\beta-\beta_0)$ weakly converges to normal distribution with mean $0$ under some mild regularity conditions.

Though extensively studied and used in recent years, debiased LASSO \citep{zhang2014confidence,javanmard2014confidence,van2014asymptotically} has been criticized on that under the generalized linear model setting, its sparsity condition imposed on inverse of the information matrix, is not explainable and justifiable, leading to a subpar performance in practice \citep{xia2020revisit}. Interestingly, we find the model and sparsity assumption of our method is more reasonable than debiased LASSO and present a simple comparison of these two approaches in Remark \ref{rem:2}.

\begin{remark}
Assume the logistic model ${\rm P}(Y=1\mid A,\X)={\rm expit}\{\beta_0 A+\X\trans\bgamma_0\}$ is correctly specified. Let its expected information matrix be $\boldsymbol{\Sigma}_{\beta_0,\bgamma_0}={\rm E}[{\rm expit}'\{\beta_0 A+\X\trans\bgamma_0\}(A,\X\trans)\trans(A,\X\trans)]$, $\boldsymbol{\Theta}_{\beta_0,\bgamma_0}=\boldsymbol{\Sigma}_{\beta_0,\bgamma_0}^{-1}$ and $\btheta_{\beta_0}$ be the first column of $\boldsymbol{\Theta}_{\beta_0,\bgamma_0}$. In \cite{van2014asymptotically,jankova2016confidence}, asymptotic normality of the debiased logistic LASSO estimator for $\beta_0$ is built under the sparsity assumption: $\|\btheta_{\beta_0}\|_0=o(\{n/(\log p)^2\}^{1/3})$. Using cross-fitting to estimate $\boldsymbol{\Sigma}_{\beta_0,\bgamma_0}$, recent work like \cite{ma2020global} and \cite{liu2020integrative} has relaxed this condition to $\|\btheta_{\beta_0}\|_0=o(n^{1/2}/\log p)$, or replaced it with approximate sparsity assumptions such as $\|\btheta_{\beta_0}\|_1=O(1)$. However, in the presense of the weight ${\rm expit}'\{\beta_0 A+\X\trans\bgamma_0\}$ in $\boldsymbol{\Sigma}_{\beta_0,\bgamma_0}$, neither of these assumptions are explainable nor they are justifiable for the most common gaussian design \citep{xia2020revisit}. 

In comparison, we require that ${\rm E}(A\mid Y=0,\X=\x)=g(\X_i\trans\balpha_0)$ with $\|\balpha_0\|_0=o(n^{1/2}/\log p)$. This assumption has two advantages. First, it accommodates nonlinear link function $g(\cdot)$, which could make the assumption more reasonable for a categorical $A$. Second, our assumption is imposed on a conditional model directly so it is more explainable than debiased LASSO. As a simple example, consider a practically useful conditional gaussian model: $(A,\X\trans)\trans\mid \{Y=j\}\sim \mathcal{N}(\bmu_j,\boldsymbol{\Sigma})$ for $j=0,1$. Then we have $r_0(\x)=\x\trans\bgamma_0$ where $\bgamma_0=\Sigma^{-1}(\bmu_1-\bmu_0)$ and $A\mid \X,Y=0$ follows a gaussian linear model: $m_0(\x)=\x\trans\balpha_0$ with $\balpha_0$ determined by $\Sigma^{-1}$. Thus, our model sparsity assumptions on the nuisance coefficients $\balpha_0$ and $\bgamma_0$ are actually imposed on the data generation parameters $\Sigma^{-1}$ and $\bmu_1-\bmu_0$, which are more explainable and verifiable in practice.

\label{rem:2}
\end{remark}

\subsection{General machine learning modeling}\label{sec:method:ml}

In this section, we turn to a more general machine learning setting that any learning algorithms of conditional mean could be potentially applied to estimate $r_0(\cdot)$ and $m_0(\cdot)$. Similar to \cite{chernozhukov2016double}, we randomly split the $n$ samples into $K=O(1)$ folds: $\Isc_1,\Isc_2,\ldots,\Isc_K$ of equal size, to help remove the over-fitting bias. Denote by $\Isc_{\text{-}k}=\{1,\ldots,n\}\setminus\Isc_k$ and we replace estimating equation (\ref{equ:dr:general}) by the cross-fitted
\begin{equation}
\frac{1}{n}\sum_{k=1}^K\sum_{i\in\Isc_k}\left\{Y_ie^{-\beta A_i}-(1-Y_i)e^{\widehat r\supmfk(\X_i)}\right\}\left\{A_i-\widehat m\supmfk(\X_i)\right\}=0,
\label{equ:dr:cross}     
\end{equation}
where $\widehat r\supmfk(\cdot)$ and $m\supmfk(\cdot)$ are two machine learning estimators obtained with the samples in $\Isc_{\text{-}k}$ and converging to the true models $r_0(\cdot)$ and $m_0(\cdot)$. %Corresponding to the bias terms $\Delta_m$ and $\Delta_r$ defined in (\ref{equ:bias}), 
We outline as follows a strategy utilizing an arbitrary (supervised) learning algorithm to estimate the nuisance models.

%identified using the fact that $r_0(\cdot)$ solves
%\[
%0={\rm E}\left[Ye^{-\beta_0 A}-(1-Y)e^{r(\X)}\Big|\X\right]={\rm E}\left[e^{-\beta_0 A}\Big|\X,Y=1\right]-e^{r(\X)}{\rm E}\left[(1-Y)\big|\X\right],
%\]
%and $m_0(\cdot)$ solves ${\rm E}[A-m(\X)\big|\X,Y=1]=0$.

Suppose there is a blackbox procedure $\mathscr{L}(R_i,\bC_i;\Isc)$ that inputs samples from $\Isc\subseteq\{1,2,\ldots,n\}$ with some response $R_i$ and covariates $\bC_i$ and outputs an estimation of ${\rm E}[R_i\mid\bC_i,i\in\Isc]$.\footnote{Without purposed modification, the natural form of most contemporary supervised learning methods, e.g. random forest, support vector machine and neural network, can be conceptualized in this way since their goal is prediction for a continuous response and classification for a categorical response.} Estimator of $m_0(\cdot)$ can be obtained by $\widehat m\supmfk(\cdot)=\mathscr{L}(A_i,\X_i;\Isc_{\text{-}k}\cap\{i:Y_i=0\})$. 
%And that of $r_0(\X)$ could be given by
%\begin{equation}
%\log\left(\frac{\mathscr{L}(e^{-\beta_0A_i},\X_i;\Isc_{\text{-}k}\cap\{i:Y_i=1\})}{\mathscr{L}(1-Y_i,\X_i;\Isc_{\text{-}k})}\right).  
%\label{equ:est:r:ml}
%\end{equation}
Unlike the partial linear setting in \cite{chernozhukov2016double}, estimating $r_0(\cdot)$ using $\mathscr{L}$ is more sophisticated since it is defined through an unextractable semiparametric form ${\rm P}(Y=1\mid A,\X)={\rm expit}\{\beta_0 A+r_0(\X)\}$. 
%it requires estimating two conditional mean models with $\mathscr{L}$ and the response of one of them, $e^{-\beta_0 A}$ involves an unknown parameter $\beta_0$. Thus, we need to get an initial estimation of $\beta_0$, denoted as $\breve\beta$, beforehand to enable the estimation procedure represented by (\ref{equ:est:r:ml}). As we shall demonstrate later, this initial estimator needs not achieve the parametric rate but a slower rate that does not impact the convergence rate of $\mathscr{L}(e^{-A_i\breve\beta},\X_i;\Isc_{\text{-}k}\cap\{i:Y_i=1\})$ to ${\rm E}[e^{-\beta_0 A}|\X=\x,Y=1]$ compared with $\mathscr{L}(e^{-\beta_0A_i},\X_i;\Isc_{\text{-}k}\cap\{i:Y_i=1\})$. Considering that $\beta_0A$ in ${\rm P}(Y=1\mid A,\X)={\rm expit}\{\beta_0 A+r_0(\X)\}$ is not extractable, making the direct use of $\mathscr{L}$ infeasible,
Certainly, one could modify some existing machine learning approaches like neural network\footnote{By setting the last layer of the neural network to be the combination of a complex network of $\X$ and a linear function of $A$ and linking it with the outcome through an expit link.} to accommodate this form. However, such modification is not readily available in general and even there was some way to adapt $\mathscr{L}$ to this semiparametric form, it would typically require additional human efforts on its implementing and validating. Thus, we introduce a ``full model refitting" procedure basing on an arbitrary algorithm $\mathscr{L}$ to estimate $r_0(\cdot)$. Our method is motivated by a simple proposition:
\begin{prop}
Let the full model $M(A,\X)={\rm P}(Y=1\mid A,\X)={\rm expit}\{\beta_0 A+r_0(\X)\}$. One have:
\[
\beta_0={\rm argmin}_{\beta\in\mathbb{R}} {\rm E}\left[{\rm logit}\{M(A,\X)\}-\beta(A-{\rm E}[A|\X])\right]^2.
\]
\label{prop:1}
\end{prop}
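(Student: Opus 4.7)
The plan is to carry out a Robinson-style (partialling-out) decomposition of the quadratic objective. First I would substitute the model identity $\mathrm{logit}\{M(A,\X)\}=\beta_0 A+r_0(\X)$ inside the expectation, so that the integrand becomes $\bigl\{\beta_0 A+r_0(\X)-\beta(A-\mathrm{E}[A\mid\X])\bigr\}^2$.

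Next I would rearrange the expression inside the brace by adding and subtracting $\beta_0\mathrm{E}[A\mid\X]$, yielding the algebraic identity
\[
\beta_0 A+r_0(\X)-\beta(A-\mathrm{E}[A\mid\X]) = (\beta_0-\beta)(A-\mathrm{E}[A\mid\X]) + \bigl(\beta_0\mathrm{E}[A\mid\X]+r_0(\X)\bigr).
\]
The point of this splitting is that the first piece is the ``residualized'' part of $A$ (mean zero given $\X$), while the second piece is a pure function of $\X$.

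Then I would expand the square and take expectation. The cross term is $2(\beta_0-\beta)\,\mathrm{E}\bigl[(A-\mathrm{E}[A\mid\X])\bigl(\beta_0\mathrm{E}[A\mid\X]+r_0(\X)\bigr)\bigr]$, which vanishes by the tower property since $\mathrm{E}[A-\mathrm{E}[A\mid\X]\mid\X]=0$ and the other factor is $\X$-measurable. This reduces the objective to
\[
F(\beta) = (\beta_0-\beta)^2\,\mathrm{E}\bigl[(A-\mathrm{E}[A\mid\X])^2\bigr] + \mathrm{E}\bigl[(\beta_0\mathrm{E}[A\mid\X]+r_0(\X))^2\bigr],
\]
where the second term is a constant independent of $\beta$.

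Finally, under the mild identifiability assumption $\mathrm{E}[(A-\mathrm{E}[A\mid\X])^2]>0$ (i.e.\ $A$ is not a deterministic function of $\X$, which is already implicit for $\beta_0$ to be identified), $F(\beta)$ is a strictly convex quadratic in $\beta$ minimized uniquely at $\beta=\beta_0$. There is no real obstacle here beyond noting this non-degeneracy condition; the whole argument is one short display once the additive structure of the logit is exploited. The significance of the proposition is not its difficulty but its implication for implementation: it recasts estimation of $r_0$ inside the semiparametric logistic model as a two-step procedure requiring only generic conditional-mean learners for $M(A,\X)$ and $\mathrm{E}[A\mid\X]$, followed by a one-dimensional least-squares fit.
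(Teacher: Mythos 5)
Your proof is correct and follows essentially the same partialling-out decomposition as the paper: substitute the logit identity, split off the residualized term $(\beta_0-\beta)(A-\mathrm{E}[A\mid\X])$, and kill the cross term by the tower property. (Your definition of the $\X$-measurable remainder, $\beta_0\mathrm{E}[A\mid\X]+r_0(\X)$, is actually the correct one; the paper's $\eta(\X)=r_0(\X)+\mathrm{E}[A\mid\X]$ drops the factor $\beta_0$, evidently a typo, and your explicit non-degeneracy condition for uniqueness of the minimizer is a worthwhile addition.)
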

\begin{proof}
For any $\beta\in\mathbb{R}$, we have
\begin{align*}
&{\rm E}\left[{\rm logit}\{M(A,\X)\}-\beta(A-{\rm E}[A|\X])\right]^2={\rm E}\left\{\beta_0 A+r_0(\X)-\beta(A-{\rm E}[A|\X])\right\}^2\\
=&{\rm E}\left\{(\beta_0-\beta)(A-{\rm E}[A|\X])+\eta(\X)\right\}^2=(\beta_0-\beta)^2{\rm E}(A-{\rm E}[A|\X])^2+{\rm E}\{\eta(\X)\}^2,
\end{align*}
where $\eta(\X)=r_0(\X)+{\rm E}[A|\X]$. Thus, $\beta_0$ minimizes ${\rm E}\left[{\rm logit}\{M(A,\X)\}-\beta(A-{\rm E}[A|\X])\right]^2$.
\end{proof}
Randomly each split $\Isc_{\text{-}k}$ into $K$ folds $\Isc_{\text{-}k,1},\ldots\Isc_{\text{-}k,K},$ of equal size. Motivated by Proposition \ref{prop:1}, we first estimate the ``full" model $M(A,\X)$ leaving out each $\Isc_{\text{-}k,\text{-}j}$ in $\Isc_{\text{-}k}$: 
\[
\widehat M\supmfkj(\cdot)=\mathscr{L}(Y_i,(A_i,\X_i\trans)\trans;\Isc\submfkj),
\]
and learn $a(\X)={\rm E}[A|\X]$ as $\widehat a\supmfkj(\cdot)=\mathscr{L}(A_i,\X_i;\Isc\submfkj)$. Then fit the (cross-fitted) least square regression to obtain:
\begin{equation}
\breve\beta\supmfk={\rm argmin}_{\beta\in\mathbb{R}}\frac{1}{|\Isc_{\text{-}k}|}\sum_{j=1}^K\sum_{i\in\Isc_{\text{-}k,j}}\left[{\rm logit}\left\{\widehat M\supmfkj(A_i,\X_i)\right\}-\beta\left\{A_i-\widehat a\supmfkj(\X_i)\right\}\right]^2.
\label{equ:est:init}
\end{equation}
We use cross-fitting in (\ref{equ:est:init}) to avoid the over-fitting of $\widehat M\supmfkj(\cdot)$ and $a\supmfkj(\cdot)$. Estimator of $r_0(\X_i)$ could then be given through $r_0(\X_i)={\rm logit}\{M(A_i,\X_i)\}-\beta_0 A_i$. Note that the empirical version of ${\rm logit}\{M(A_i,\X_i)\}-\beta_0 A_i$ typically involves $A_i$ due to the discrepancy of the true $\beta_0$ and $M(\cdot)$ from their empirical estimation, which can impede the removal of over-fitting bias terms in (\ref{equ:bias}) since $A-m_0(\X)$ is not orthogonal to the error dependent on $A$. So we instead use the conditional mean of ${\rm logit}\{M(A,\X)\}-\beta_0 A$ on $\X$ to estimate $r_0(\cdot)$. Let $W_i={\rm logit}\{\widehat M\supmfkj(A_i,\X_i)\}$ for each $i\in\Isc_{\text{-}k,j}$ and obtain $\widehat t\supmfk(\cdot)=\mathscr{L}(W_i,\X_i;\Isc_{\text{-}k})$, as a ``refitting" estimation of $t(\x):={\rm E}[{\rm logit}\{M(A,\X)\}|\X=\x]$. Then the estimator of $r_0(\cdot)$ is given by:
\[
\widehat r\supmfk(\x)=\widehat t\supmfk(\x)-\breve\beta\supmfk \widehat a\supmfk(\x),\quad\mbox{where}\quad\widehat a\supmfk(\x)=\frac{1}{K}\sum_{j=1}^K\widehat a\supmfkj(\x).
\]
Alternatively, one can also refit $r_0(\cdot)$ through
\[
\widehat r\supmfk(\cdot)=\log\left(\frac{\mathscr{L}(e^{-\breve\beta\supmfk A_i},\X_i;\Isc_{\text{-}k}\cap\{i:Y_i=1\})}{\mathscr{L}(1-Y_i,\X_i;\Isc_{\text{-}k})}\right),
\]
inspired by the moment condition sufficient to identify $r_0(\x)$:
\[
{\rm E}\left[Ye^{-\beta_0 A}-(1-Y)e^{r_0(\X)}\Big|\X\right]={\rm E}\left[e^{-\beta_0 A}\Big|\X,Y=1\right]-e^{r_0(\X)}{\rm E}\left[(1-Y)\big|\X\right]=0.
\]
%With this (cross-fitted) initial estimator for $\beta_0$, we let $W_i=\exp(-\breve\beta\supmfkj A_i)$ for $i\in\Isc_{\text{-}k,j}$ and obtain the nuisance estimator:
%\[
%\widehat r\supmfk(\cdot)=\log\left(\frac{\mathscr{L}(W_i,\X_i;\Isc_{\text{-}k}\cap\{i:Y_i=1\})}{\mathscr{L}(1-Y_i,\X_i;\Isc_{\text{-}k})}\right).  
%\]
%we use cross-fitting to obtain $\breve\beta\supmfk$ to ensure that the plug-in estimator $\breve\beta\supmfkj$ is independent of $\{A_i,\X_i\}$ for each $i\in\Isc_{\text{-}k,j}$ so that samples fed into the learning algorithm are independent within each fold $\Isc_{\text{-}k,j}$. And similarly, 
At last, we outline the theoretical investigation on the estimator $\widehat\beta$ finally solved from (\ref{equ:dr:cross}). Similar to \cite{chernozhukov2016double}, the analysis relies on certain mild regularity conditions and the assumptions that (i) $\mathscr{L}$ outputs uniformly consistent estimators for the conditional mean models in \emph{all} learning objects it is implemented on; (ii) rMSEs of these estimators output by $\mathscr{L}$ are controlled by $o_p(n^{-1/4})$.
\begin{remark}
The above described assumptions (i) and (ii) imply that $\mathscr{L}$ should perform similarly well on the learning objects with the covariates set as $\X$ or $(A,\X\trans)\trans$. Classic nonparametric regression approaches like kernel smoothing or sieve may not satisfy this because including one more covariate $A$ in the model could have substantial impact on their rate of convergence. Thus, we recommend using modern learning approaches that are more dimensionality-robust, such as random forest and neural network, in our framework. While the classic sieve or kernel construction for one-dimensional $\X$ in a type of ``plug-in" model has been well-studied in existing work like \cite{severini1994quasi,lin2006semiparametric}.
\end{remark}
Based on assumption (ii), we have that $\|\widehat m\supmfk(\X)-m_0(\X)\|\subpnorm=o_p(n^{-1/4})$,
\[
\breve\beta\supmfk=\frac{\sum_{i\in\Isc_{\text{-}k}}{\rm logit}\{M(A_i,\X_i)\}\{A_i-a(\X_i)\}}{\sum_{i\in\Isc_{\text{-}k}}\{A_i-a(\X_i)\}^2}+o_p(n^{-1/4})=\beta_0+O_p(n^{-1/2})+o_p(n^{-1/4}),
\]
%Then by the inequality:
%\begin{align*}
%&\left\|\{\mathscr{L}(W_i,\X_i;\Isc_{\text{-}k}\cap\{i:Y_i=1\})\}(\X)-{\rm E}[e^{-\beta_0 A}|\X,Y=1]\right\|\subpnorm\\
%\leq&\left\|\{\mathscr{L}(e^{-\breve\beta\supmfkj A_i},\X_i;\Isc_{\text{-}k}\cap\{i:Y_i=1\})\}(\X)-{\rm E}[e^{-\breve\beta\supmfkj A_i}|\X,Y=1]\right\|\subpnorm+\left\|e^{-\beta_0 A}-e^{\breve\beta\supmfkj A_i}\right\|\subpnorm,
%\end{align*}
and consequently $\|\widehat r\supmfk(\X)-r_0(\X)\|\subpnorm=o_p(n^{-1/4})$. Then the second order error presented in (the cross-fitted version of) (\ref{equ:expand}) is $o_p(n^{-1/2})$. Also, we remove the first order (over-fitting) bias defined by (\ref{equ:bias}) through assumption (i) and concentration, facilitated by the use of cross-fitting in (\ref{equ:dr:cross}), in the same spirit as \cite{chernozhukov2016double}. Combining these two results leads to that (\ref{equ:dr:cross}) is asymptotically equivalent with $n^{-1}\sum_{i=1}^nh(Y_i,A_i,\X_i;r_0(\cdot),m_0(\cdot))=0$ and thus $n^{1/2}(\widehat\beta-\beta_0)$ is asymptotically normal with mean $0$ under mild regularity conditions.

%Meanwhile, the second order errors as introduced in Section \ref{sec:moti} can controlled by $O_p(n^{-1/2})$. Thus, similar to \cite{chernozhukov2016double}, i.e. rMSE of the learning algorithm $\mathscr{L}$ being $o_p(n^{-1/4})$, which 

\subsection{Efficiency enhancing}\label{sec:eff}
We turn back to the construction of \cite{tan2019doubly}:
\begin{equation}
\frac{1}{n}\sum_{i=1}^n\widehat\phi(\X_i)e^{-\widehat r(\X_i)}\left\{Y_ie^{-\beta A_i}-(1-Y_i)e^{\widehat r(\X_i)}\right\}\left\{A_i-\widehat m(\X_i)\right\}=0,
\label{equ:dr:phi}    
\end{equation}
where $\widehat\phi(\X_i)$ is an empirical estimation of a (typically positive) nuisance function $\phi(\cdot)$ that depends on the nuisance models $\bar r(\cdot)$ and $\bar m(\cdot)$ and affects the asymptotic variance of $\widehat\beta$. \cite{tan2019doubly} proposed and studied two options for $\phi(\cdot)$ including:
\[
\phi_{\rm opt}(\X)=\frac{{\rm E}[\{A-\bar m(\X)\}^2|\X,Y=0]}{{\rm E}[\{A-\bar m(\X)\}^2/{\rm expit}\{\beta_0 A+\bar r(\X)\}|\X,Y=0]};\quad\phi_{\rm simp}(\X)={\rm expit}\{\bar r(\X)\}.
\]
It was shown that when both nuisance models are correctly specified, the estimator solved with the weight $\phi_{\rm opt}(\X)$ achieves the minimum asymptotic variance. Since calculation of $\phi_{\rm opt}(\X)$ involves numerical integration with respect to $\X$ given $Y=0$, it is sometimes inconvenient to implement. So \cite{tan2019doubly} also provides another simplified choice: $\phi_{\rm simp}(\X)$, obtained by evaluating $\phi_{\rm opt}(\X)$ at $\beta_0=0$.

Inclusion of the nuisance estimator $\widehat\phi(\X_i)e^{-\widehat r(\X_i)}$ incurs two challenges. First, it introduces additional bias terms. Second, formation of the first order bias $\Delta_m$ and $\Delta_r$ in (\ref{equ:bias}) alters. Correspondingly, we make some moderate modifications on methods described in Sections \ref{sec:method:hd} and \ref{sec:method:ml}.  We adopt again a cross-fitting strategy (for both the high dimensional parametric and machine learning settings) to obtain $\widehat\phi\supmfk(\X_i)e^{-\widehat r\supmfk(\X_i)}$ and plug-in it at $i\in\Isc_k$, for $k=1,2,\ldots,K$. Note that a function depending solely on $\X_i$ is orthogonal to $h(Y_i,A_i,\X_i;\bar r(\cdot),\bar m(\cdot))=\{Y_ie^{-\beta A_i}-(1-Y_i)e^{\bar r(\X_i)}\}\{A_i-\bar m(\X_i)\}$ when $\bar r(\cdot)=r_0(\cdot)$ or $\bar m(\cdot)=m_0(\cdot)$. Then as long as $\widehat\phi\supmfk(\X_i)e^{-\widehat r\supmfk(\X_i)}$ is consistent and at least one nuisance model is correct, we can remove the (cross-fitted) bias term:
\[
\frac{1}{n}\sum_{k=1}^K\sum_{i\in\Isc_k}\left\{\widehat\phi\supmfk(\X_i)e^{-\widehat r\supmfk(\X_i)}-\phi_i(\X_i)e^{\bar r(\X_i)}\right\}h(Y_i,A_i,\X_i;\bar r(\cdot),\bar m(\cdot))
\]
through concentration. Meanwhile, we note that the first order bias terms $\Delta_m$ and $\Delta_r$ defined in (\ref{equ:bias}) are weighted by $\widehat\phi\supmfk(\X_i)e^{-\widehat r\supmfk(\X_i)}$ under this efficiency enhancing construction.
%\begin{align*}
%&\Delta_m=\frac{1}{n}\sum_{i=1}^n\widehat\phi\supmfk(\X_i)e^{-\widehat r\supmfk(\X_i)}\left\{Y_ie^{-\beta A_i}-(1-Y_i)e^{\bar r(\X_i)}\right\}\left\{\widehat m(\X_i)-\bar m(\X_i)\right\};\\
%&\Delta_r=\frac{1}{n}\sum_{i=1}^n\widehat\phi\supmfk(\X_i)e^{-\widehat r\supmfk(\X_i)}(1-Y_i)e^{\bar r(\X_i)}\left\{\widehat r(\X_i)-\bar r(\X_i)\right\}\left\{A_i-\bar m(\X_i)\right\}.
%\end{align*}
So we naturally weight the moment equations (\ref{equ:dant:m}) and (\ref{equ:dant:r}) with $\widehat\phi\supmfk(\X_i)e^{-\widehat r\supmfk(\X_i)}$, as a modification of the high dimensional sparse modelling strategy. While in the machine learning scenario, both the nuisance estimators are supposed to approach the corresponding true models, i.e. $\bar r(\cdot)=r_0$ and $\bar m(\cdot)=m_0$ so there is no need to modify the way to obtain $\widehat m\supmfk(\cdot)$ and $\widehat r\supmfk(\cdot)$ in Section \ref{sec:method:ml}.

\section{Conclusion}

In this note, we extend the low dimensional parametric doubly robust approach for logistic partially linear model of \cite{tan2019doubly} to the settings where the nuisance models are estimated by the high dimensional sparse regression or general machine learning methods. For the high dimensional setting, we derive certain moment equations for the nuisance models to remove the first order bias. Also, we find the sparsity assumption of our approach is more explainable and reasonable than the ``sparse inverse information matrix" assumption used by debiased LASSO \citep{van2014asymptotically,jankova2016confidence}. For the general machine learning framework, we handle the non-linearity and ``unextractablility" issue of the logistic partial model using a ``full model refitting" procedure. This procedure is easy to implement and facilitates the use of arbitrary learning algorithms for the nuisance models in our framework. Meanwhile, it could be potentially extended to handle other similar structure issues like that of the partially linear $M$-estimator.

We also outline the key theoretical analysis procedures of our approaches and demonstrate the model double robustness of the high dimensional construction under ultra-sparsity assumptions and the rate double robustness of the machine learning setting. For the high dimensional setting, we note that our ultra-sparsity assumption, i.e. $s=o(n^{1/2}/\log p)$ on both nuisance models may be moderately relaxed through cross-fitting, inspired by \cite{smucler2019unifying}.

\section*{Acknowledgements}

The author thanks his advisor, Tianxi Cai, and collaborator, Yi Zhang, for helpful discussion and comments on this note.

%Our note is also a complementary of the implementing and technical details 

%%%%%%%%%%%%%%%%%%%%%%%%%%%%%%%%%%%%%%%%%%%%%%%%%%%%%%%%%%%%%%%%%%%%%%%%%%%%%%%% 
%%%%%%%%%%%%%%%%%%%%%%%%%%%%%%%%%%%%%%%%%%%%%%%%%%%%%%%%%%%%%%%%%%%%%%%%%%%%%%%%

\bibliographystyle{apalike}
\bibliography{library}

\newtheorem{thmx}{Theorem}
\renewcommand{\thethmx}{\Alph{thmx}}
\setcounter{lemma}{0}
\renewcommand{\thelemma}{A\arabic{lemma}}

\clearpage
\newpage
\setcounter{page}{1}
\appendix

\end{document}